\let\csname equation*\endcsname\relax
\let\csname endequation*\endcsname\relax
\newcommand{\Rst}{{\mathbb{R}}}
\newcommand{\bC}{{\mathbb{C}}}
\newcommand{\Rdst}{{\mathbb{R}^d}}
\newcommand{\norm}[1]{\lVert#1\rVert}
\newcommand{\trace}{\mathrm{trace}}
\newcommand{\abs}[1]{\ensuremath{\left| #1 \right| }}
\newcommand{\newd}{m}
\newtheorem{lemma}{Lemma}[section]
\newtheorem{theorem}[lemma]{Theorem}
\newtheorem{coro}[lemma]{Corollary}
\newtheorem{remark}[lemma]{Remark}
\newtheorem{definition}[lemma]{Definition}
\begin{document}
\title[The WH ensemble: hyperuniformity and higher Landau levels]{The Weyl-Heisenberg ensemble: hyperuniformity and
higher Landau levels}

\author{L. D. Abreu$^1$, J. M. Pereira$^2$, J. L. Romero$^1$ and S. Torquato$^3$}
\address{$^1$ Austrian Academy of Sciences, Acoustics Research Institute, Wohllebengasse 12-14, Vienna A-1040, Austria.}

\address{$^2$ Program in Applied and Computational Mathematics, Princeton University, New Jersey 08544, USA}

\address{$^3$ Department of Chemistry, Department of Physics, Princeton Institute for the Science
and Technology of Materials, and Program in Applied and Computational
Mathematics, Princeton University, New Jersey 08544, USA}

\begin{abstract}
Weyl-Heisenberg ensembles are a class of determinantal point processes
associated with the Schr\"{o}dinger representation of the Heisenberg
group. Hyperuniformity characterizes a state of matter for which (scaled)
density fluctuations diminish towards zero at the largest length scales. We
will prove that Weyl-Heisenberg ensembles are hyperuniform. Weyl-Heisenberg ensembles include as a
special case a multi-layer extension
of the Ginibre ensemble modeling the distribution of electrons in higher
Landau levels, which has recently been object of study in the realm of the
Ginibre-type ensembles associated with polyanalytic functions. In addition, the family of Weyl-Heisenberg ensembles includes new structurally anisotropic processes, where point-statistics depend on
 the different spatial directions, and thus provide a first means to study directional hyperuniformity.
\end{abstract}

\pacs{05.30.-d, 05.40.-a}

\bigskip

\bigskip

\hspace{1.55cm} \parbox{12cm}{{\it Keywords\/}: Weyl-Heisenberg ensemble, hyperuniformity, higher Landau levels,
disorder, fluctuations.}

\maketitle

\section{Introduction}
The characterization of density fluctuations in many-body systems is a
problem of great interest in the physical, mathematical and biological
sciences. A \emph{hyperuniform} many-particle system is one in which
density fluctuations are anomalously suppressed at long-wavelengths,
compared to those occurring in the Poisson point process and typical
correlated disordered point processes. The hyperuniformity concept provides
a new way to classify crystals, certain quasiperiodic systems, and special
disordered systems \cite{TorStil,PHRE2009}. A theory for
understanding hyperuniformity in terms of the number variance of point
processes has been developed in \cite{TorStil} and \cite{PRL2011}. The
theory characterizes hyperuniform point process in $d$-dimensions with the
property that the variance in the number of points in an observation window
of radius $R$ grows at a rate slower than $R^{d}$, which is the growth
rate for a Poisson point process.

It is known that some determinantal point processes are disordered
and hyperuniform \cite{TorScaZac,CosLeb,SM2009,PNAS2013,GhLe16}. \emph{Weyl-Heisenberg Ensembles}
are a very general class of determinantal point processes
on $\mathbb{R}^{d}$, with $d=2 \newd$ an even number. They are
defined in terms of the Schr\"{o}dinger
representation of the Heisenberg group acting on a vector
$g\in L^{2}(\mathbb{R}^{\newd})$. For choices of $g$ in the
Hermite function basis of
$L^{2}(\mathbb{R}^{\newd})$, they reduce to extensions of the two-dimensional
one-component plasma and the Ginibre ensemble to higher Landau levels. In
this paper, we will show that Weyl-Heisenberg ensembles are hyperuniform.
Actually, a bit more is true: the variance of the number of points in an
observation window of radius $R$ grows at a rate proportional to $R^{d-1}$.
This was already known to happen in the two-dimensional one-component
plasma. Our results about the Weyl-Heisenberg ensembles show that the same
happens with the distribution of electrons when higher Landau levels are
formed under strong magnetic fields.
Higher Landau levels lead to the macroscopic
effect known as the Quantum-Hall effect \cite{GirvJach,QHEGRAPH}.

There are several ways of randomly distributing points on the Euclidean space. Under mild assumptions, a point process is completely specified by the countably
infinite set of generic $k$-particle probability density functions, denoted
by $\rho _{k}(\mathbf{r}_{1},...,\mathbf{r}_{k})$. These are proportional to the
probability density of finding $k$ particles in volume elements around the
given positions $(\mathbf{r}_{1},...,\mathbf{r}_{k})$, irrespective of the
remaining particles. More precisely, if $\mathcal{X}$ is a simple process defined on the Euclidean space - so that points do not have multiplicities, then $\rho_k$ is characterized by the properties that (i) $\rho _{k}(\mathbf{r}_{1},...,\mathbf{r}_{k})=0$, whenever
the positions of two of the
points
$\mathbf{r}_{j}$ are equal, and (ii) for every family of disjoint measurable sets $D_1, \ldots D_k$,
\begin{align*}
\mathbb{E} \left[ \prod_{j=1}^k \mathcal{X}(D_j) \right]
= \int_{\prod_j D_j} \rho_{k}(\mathbf{r}_{1},...,\mathbf{r}_{k})
d\mathbf{r}_{1} \ldots d\mathbf{r}_{k},
\end{align*}
where $\mathcal{X}(D)$ denotes the number of points to be found in $D$.
For instance, Poisson processes distribute points randomly but in a
completely uncorrelated way. Indeed, in this case, the $k$-particle
probability density reduces to constants easy to compute. Since the
single-particle density function can be obtained from the thermodynamical
limit $\rho $, where $V$ is the strength of the magnetic field and grows
together with the number of points $N$:
\begin{equation*}
\rho (\mathbf{r}_{1})=\lim_{N,V\rightarrow \infty }\frac{N}{V}=\rho ,
\end{equation*}
the $k$-particle probability densities for Poisson processes are given by
\begin{equation*}
\rho _{k}(\mathbf{r}_{1},...,\mathbf{r}_{k})=\rho ^{k}.
\end{equation*}
However, in several many-body systems and other physical models, one has to
take into account particle-particle interactions, requiring more
sophisticated probabilistic models. In studies of the statistical mechanics
of point-like particles one is usually interested in a handful of quantities
such as $k$-particle correlations. It is then of paramount importance to
study point processes for which the properties of such correlations have
convenient analytic descriptions as, for instance, in the so-called ghost
random sequential addition processes \cite{TorStil2006}. But such exactly
solvable models are not so common, leading to widespread use of Poisson
processes instead of more sophisticated probability models, because simple
analytic expressions for the $k$-particle correlations are available. Facing
such a gap between the physical model and its mathematical description, one
may be led to think that the possibility of using probabilistic models
describing interacting particles, with $k$-particle correlations written in
analytic form, is a hopeless mathematical chimera. However,
using determinantal point processes, it is possible to construct such probabilistic models.

Determinantal Point Processes (DPP's) are defined in terms of a kernel where the negative correlation between points is
built in.
Because of the repulsion inherent of the model, they are convenient to describe physical systems with charged-liked
particles, where the confinement to a
bounded region is controlled by a weight function involving the external
field. Unlike other non-trivial statistical models, the $k$-particle
correlations of DPP's admit an analytic expression as determinants whose
entries are defined using the correlation kernel. Moreover, DPP's enjoy
a remarkable property which allows one to derive the macroscopic laws of
physical systems constituted by interacting particles which display chaotic
random behavior at small scales: for very large systems of points
confined to bounded regions, the distribution patterns begin to look less
chaotic and start to organize themselves in an almost uniform way. In
several cases, in the proper scaled thermodynamical limit, the distributions
are either uniform or given by analytic expressions. This phenomenon is
related to a property of physical and mathematical systems known as
\emph{universality} \cite{Deift}.

The present paper explores a link between the theory of DPP's and the
Schr\"{o}dinger representation of the Heisenberg
group. This link is
important because it allows one to deal with problems involving non-analytic
functions of complex variables using real-variable methods. It has also been
used in \cite{pure} to obtain analytic and probabilistic results for a large
class of planar ensembles, building on previous work on time-frequency
analysis and approximation theory \cite{Abreusampling, AGR}. While
the papers \cite{pure,AGR} are concerned with universality-type
limit distribution laws and with probabilistic aspects of finite-dimensional
Weyl-Heisenberg ensembles, in this paper we will focus on \emph{infinite Weyl-Heisenberg ensembles}.

The article is organized as follows. Section \ref{sec_dpp} introduces planar DPP's and our model process, the Ginibre
ensemble. The notion of hyperuniformity is introduced in Section \ref{sec_hy}.
Section \ref{sec_poly} presents higher Landau levels and the polyanalytic ensembles, that describe them mathematically.
Section \ref{sec_wh} presents Weyl-Heisenberg ensembles and the main results on hyperuniformity, which are then
applied to the polyanalytic ensemble and higher Landau levels. We also discuss and analyze the total correlation
function and the structure factor associated with these point processes, and provide explicit formulae whenever
possible. Finally Section \ref{sec_con} summarizes the results.

\section{Planar DPP's and the Ginibre ensemble}
\label{sec_dpp}

Determinantal point processes are defined using an ambient space $\Lambda $
and a Radon measure $\mu $ defined on $\Lambda $ \cite{boro, DetPointRand}.
In our case, $\Lambda = \Rdst$, where $d=2\newd$ is an even number, and we use the identification $\Rdst=\bC^\newd$,
since several of the examples of interest are best described in terms of complex variables. For this reason, we
sometimes denote points in $\Rdst=\bC^\newd$ by $z$, instead of the usual $\mathrm{r}$.
The important object is the Hilbert space $L^{2}\left( \mathbb{C}^{\newd}\right) $,
with the Lebesgue measure in $\mathbb{C}^{\newd}$.
If $\{\varphi _{j}(z)\}_{j \geq 0}$ is an orthogonal sequence of $L^{2}\left(
\mathbb{C}^{\newd}\right) $, one can
define a reproducing kernel $K_{N}\left( z,w\right) $ by writing
\begin{equation}
K_{N}\left( z,w\right) =\sum_{j=0}^{N-1}\varphi _{j}(z)\overline{\varphi
_{j}(w)}.  \label{kernel}
\end{equation}
The kernel $K_{N}\left( z,w\right) $ will be the correlation kernel of the
point process $\mathcal{X}$, whose $k$-point intensities are given by $\rho
_{k}(x_{1},...,x_{k})=\det \left( K_{N}(x_{i},x_{j})\right) _{1\leq i,j\leq
k}$. For instance, if $\newd=1$, selecting $\varphi _{j}(z)=(\pi ^{j}/j!)^{\frac{
1}{2}}e^{-\tfrac{\pi}{2} \left\vert z\right\vert ^{2}}z^{j}$ for $j=0,...,N-1$ in (\ref
{kernel}), we obtain
\begin{equation*}
K_{N}\left( z,w\right) =e^{-\tfrac{\pi}{2} (\left\vert z\right\vert ^{2}+\left\vert
w\right\vert ^{2})}\sum_{j=0}^{N-1}\frac{(\pi z\overline{w})^{j}}{j!},
\end{equation*}
which is the correlation kernel of the \emph{Ginibre ensemble} of dimension $N$. If
we take $N\rightarrow \infty $, we obtain the correlation kernel of the
infinite Ginibre ensemble:
\begin{equation}
\label{eq_ginibre}
K_{\infty }\left( z,w\right) =e^{\pi z\overline{w}-\tfrac{\pi}{2} (\left\vert
z\right\vert ^{2}+\left\vert w\right\vert ^{2})}.
\end{equation}
The infinite Ginibre ensemble is \emph{translationally invariant}; this means that
the intensity functions satisfy:
$\rho_N(z_0+z,\ldots,z_{N-1}+z)=\rho_N(z_0,\ldots,z_{N-1})$, for all $z \in \bC$.

It is well known that the Ginibre ensemble is
equivalent to a model for the probability distribution of electrons in one
component plasmas \cite{Ginibre}. It also provides a model for the
statistical quantum dynamics of a charged particle evolving in a Euclidean
space under the action of a constant homogeneous magnetic field in the first
Landau level. The Ginibre ensemble can also be seen as a 2D electrostatic
model with $N$ unit charges interacting in a two dimensional space, which is
taken as the complex plane of the variable $z$. Indeed, if the potential
energy of the system is given as
\begin{equation*}
U\left( z_{0},..,z_{N-1}\right) =-\sum_{0\leq i<j\leq N-1}\log \left\vert
z_{i}-z_{j}\right\vert +\pi \sum_{k=0}^{N-1}\left\vert z_{k}\right\vert ^{2}
,
\end{equation*}
the corresponding probability distribution of the positions $z_{0},\ldots,z_{N-1}$
when the charges are in thermodynamical equilibrium, is proportional to the
measure
\begin{equation}
\exp \left[ -U\left( z_{0},..,z_{N-1}\right) \right] =\exp \left[ -\pi
\sum_{k=0}^{N-1}\left\vert z_{k}\right\vert ^{2}\right] \prod_{0\leq i<j\leq
N-1}\left\vert z_{i}-z_{j}\right\vert ^{2}.  \label{vandermond}
\end{equation}
It has been shown by Jean Ginibre \cite{Ginibre, Mehta} that the distribution
associated with the measure (\ref{vandermond}) is proportional to the one
obtained from the $N$-point intensities associated with the Ginibre
ensemble of dimension $N$. Thus, the Ginibre ensemble provides a model for the distribution
of charged-like particles in the first Landau level. Until recently, there
was no similar Ginibre type model for higher Landau levels, but this gap in
the statistical physics literature is being filled thanks to recent
work concerning the \emph{polyanalytic Ginibre ensembles} \cite{pure,HendHaimi,HaiHen2,Haimi,Shirai} - see also Section
\ref{sec_poly}.

\section{Hyperuniformity of point processes}
\label{sec_hy}

\subsection{The number mean of a DPP}

The number mean of a point process is the average number of points expected
to be found inside an observation window $D\subset \Rst^n$. One can
obtain the number mean by integrating the single particle probability
density, which is proportional to the probability density of finding a
particle at a certain point $\mathrm{r} \in D$. In the case of a DPP, it can be
obtained from the $1$-point intensity $\rho $, which is simply defined as
the diagonal of the correlation kernel of the process:
\begin{equation*}
\rho (\mathrm{r})=K\left( \mathrm{r},\mathrm{r}\right) =\sum_{i}\left\vert \varphi
_{i}(\mathrm{r})\right\vert ^{2}.
\end{equation*}
The expected number of points $\mathcal{X}(D)$ to be found in $D\subset
\mathbb{C}$ is then given as
\begin{equation*}
\mathbb{E}\left[ \mathcal{X}(D)\right] =\int_{D}\rho (\mathrm{r})d\mathrm{r}.
\end{equation*}
The $1$-point intensity $\rho $ is also called the single particle
probability density.

\subsection{Number variance and hyperuniformity}

In \cite{TorStil}, it has been discovered that a hyperuniform many-particle
system modeled by a point process $\mathcal{X}$ in a Euclidean space of
dimension $d$\ (not necessarily determinantal) is one in which the number
variance
\begin{equation*}
\sigma ^{2}(R)=\mathbb{E}\left[ \mathcal{X}(D_{R})^{2}\right] -\mathbb{E}
\left[ \mathcal{X}(D_{R})\right] ^{2},
\end{equation*}
where $D_{R}$ is a $d$-dimensional ball of radius $R$, satisfies
\begin{equation*}
\sigma ^{2}(R) = o(R^{d}).
\end{equation*}

\subsection{The total correlation function}
We will be mostly interested in a translationally invariant point process of intensity 1, i.e.,
the intensity functions satisfy $\rho_n(\mathrm{r}_1+\mathrm{r},\ldots, \mathrm{r}_n+\mathrm{r})=
\rho_n(\mathrm{r}_1,\ldots, \mathrm{r}_n)$, for all $\mathrm{r} \in \Rst^n$, and
$\rho_1 \equiv 1$. For such processes, the two-point intensity depends essentially on one variable,
and we may write:
\begin{align}
\label{eq_h}
\rho_2(\mathrm{r}_1,\mathrm{r}_2) = 1 + h(\mathrm{r}_2-\mathrm{r}_1),
\end{align}
where $h$ is known as the \emph{total correlation function}, and is related to the determinantal kernel by
\begin{align}
\label{eq_h2}
\abs{K(\mathrm{r}_1,\mathrm{r}_2)}^2 =  -h(\mathrm{r}_2-\mathrm{r}_1).
\end{align}
In statistical mechanics, it is also common to consider the \emph{structure factor} defined by
\begin{align*}
S(\mathrm{k}) = 1 + \hat{h}(\mathrm{k}),
\end{align*}
in the reciprocal space (Fourier) variable $\mathrm{k}$. (We normalize the Fourier transform
as: $\hat{f}(\mathrm{k}) = \int f(\mathrm{r}) e^{i \mathrm{r} \cdot \mathrm{k}} d\mathrm{r}$.)

\section{Polyanalytic Ginibre ensembles and higher Landau levels}
\label{sec_poly}

\subsection{The Landau levels}

Polyanalytic ensembles of the pure type model the random distribution of
charged-liked electrons in the so-called\emph{\ Landau levels}. Let us briefly describe this
relation (see \cite{HendHaimi,AoP,Zouhair} for more
details). The Hamiltonian operator describing the dynamics of a particle of
charge $e$ and mass $m_{\ast }$ on the Euclidean $xy$-plane, while
interacting with a perpendicular constant homogeneous magnetic field, is
given by the operator
\begin{equation}
H:=\frac{1}{2m_{\ast }}\left( i\hbar \nabla -\frac{e}{c}\mathbf{A}\right)
^{2},  \label{2.1.1}
\end{equation}
where $\hbar $ denotes Planck's constant, $c$ is the light velocity and $i$
the imaginary unit. Denote by $B>0$ the strength of the magnetic field and
select the symmetric gauge
\begin{equation*}
\mathbf{A=-}\frac{\mathbf{r}}{2}\times \mathbf{B=}\left( -\frac{B}{2}y,\frac{
B}{2}x\right) ,
\end{equation*}
where $\mathbf{r}=\left( x,y\right) \in \mathbb{R}^{2}$. For simplicity, we
set $m_{\ast }=e=c=\hbar =1$ in (\ref{2.1.1}), leading to the Landau
Hamiltonian
\begin{equation}
H_{B}:=\frac{1}{2}\left( \left( i\partial _{x}-\frac{B}{2}y\right)
^{2}+\left( i\partial _{y}+\frac{B}{2}x\right) ^{2}\right)  \label{LandauH}
\end{equation}
acting on the Hilbert space $L^{2}\left( \mathbb{R}^{2},dxdy\right) $. The
spectrum of the Hamiltonian $H_{B}$ consists of an infinite number of
eigenvalues with infinite multiplicity of the form
\begin{equation}
\epsilon _{n}^{B}=\left( n+\frac{1}{2}\right) B,\text{ \ \ \ \ }n=0,1,2,...
\label{2.1.4}
\end{equation}
Without loss of generality, we set $B=2\pi $ to simplify the relation to the
Weyl-Heisenberg group described in the next sections. Then we define the
operator $L_{z}$ by conjugating the Landau Hamiltonian (\ref{LandauH}) as
follows:
\begin{equation}
L_{z}:=e^{\tfrac{\pi}{2} \left\vert z\right\vert ^{2}}\left( \frac{1}{2}H_{2\pi }-\frac{\pi}{2}
\right) e^{-\tfrac{\pi}{2} \left\vert z\right\vert ^{2}}=-\partial _{z}\partial _{
\overline{z}}+\pi \overline{z}\partial _{\overline{z}},
\label{2.1.3}
\end{equation}
acting on the Hilbert space $L^{2}\left( \mathbb{C}\right) $. The spectrum
of $L_{z}$ is given by $\sigma (L_{z})=\{\nu \pi: \nu=0,1,2,\ldots\}$. The eigenvalue
$r \pi$ is the Landau level
of order $r$.\ The eigenspace associated with the eigenvalue $r \pi $ is called the \emph{pure Landau level eigenspace
of order}
$r$. With each pure Landau level eigenspace of order $r$ one can associate
correlation kernels of the form
\begin{equation}
K_{r}(z,w)=L_{r}^{0}(\pi \left\vert z-w\right\vert ^{2})e^{\pi z\overline{w}
-\tfrac{\pi}{2} (\left\vert z\right\vert ^{2}+\left\vert w\right\vert ^{2})},
\label{purekernel}
\end{equation}
where $L_{r}^{0}$ is the Laguerre polynomial, defined, for a general parameter
$\alpha $, as
\begin{equation*}
L_{n}^{\alpha }(x)=\sum_{k=0}^{n}(-1)^{k}\binom{n+\alpha }{n-k}\frac{x^{k}}{
k!}.
\end{equation*}
As we will see in the next section, (\ref{purekernel}) is the reproducing
kernel of \emph{a pure Fock space of polyanalytic functions}.\ Thus, we name
the resulting determinantal point process as a \emph{polyanalytic ensemble
of the pure type}. It is related to the polyanalytic Ginibre ensembles
investigated in \cite{HendHaimi} and the terminology \emph{pure type }is
inherited from the Landau level interpretation: determinantal processes with
kernels of the form considered in \cite{HendHaimi} have a physical
interpretation as probabilistic 2D models for the distribution of
electrons in the first $N$ Landau levels, while processes with correlation
kernels of the form (\ref{purekernel}) model the distribution of electrons
in a \emph{pure} Landau level of order $r$. For reference, one can keep in
mind that the basis functions of the Ginibre ensemble generate the proper
subspace of $L^{2}\left( \mathbb{C}\right) $ consisting of analytic
functions (the so-called Bargmann-Fock space). Moreover, the case $r=0$ in
\eqref{purekernel} is simply
\begin{equation*}
K_{0}(z,w)=e^{\pi z\overline{w}-\tfrac{\pi}{2} (\left\vert z\right\vert ^{2}+\left\vert
w\right\vert ^{2})},
\end{equation*}
which is the correlation kernel of the infinite Ginibre ensemble. Thus, the
polyanalytic ensemble of the pure type associated with the first Landau
level is, as mentioned in the introduction, the Ginibre ensemble.

Using the formula for the kernel in \eqref{purekernel} and \eqref{eq_h2}, we see that the \emph{total correlation
function of the polyanalytic ensemble of the pure type} is:
\begin{align}
\label{eq_hr}
h_r(z) = - \left[ L^0_r \left( \pi \abs{z}^2 \right) \right]^2 e^{-\pi \abs{z}^2},
\qquad z \in \bC.
\end{align}
We note that $h_r$ is a radial function and
that
$\abs{h_r(z)} \leq C_\alpha e^{-\alpha\pi \abs{z}^2}$,
for every $\alpha \in (0,1)$ - where $C_\alpha$
is a constant that depends on $\alpha$.

\subsection{Polyanalytic Fock spaces}

A function $F(z,\overline{z})$, defined on a subset of $\mathbb{C}$, and
satisfying the generalized Cauchy-Riemann equations
\begin{equation}
\left( \partial _{\overline{z}}\right) ^{q}F(z,\overline{z})=\frac{1}{2^{q}}
\left( \partial _{x}+i\partial _{\xi }\right) ^{q}F(x+i\xi ,x-i\xi )=0\text{,
}  \label{eq:c1}
\end{equation}
is said to be \emph{polyanalytic of order }$q-1$
\cite{balk}. It is clear from (\ref
{eq:c1}) that the following polynomial of order $q-1$ in $\overline{z}$\
\begin{equation}
F(z,\overline{z})=\sum_{k=0}^{q-1}\overline{z}^{k}\varphi _{k}(z),
\label{polypolynomial}
\end{equation}
where the coefficients $\{\varphi _{k}(z)\}_{k=0}^{q-1}$ are analytic
functions,\ is a polyanalytic function\ of order $q-1$. By solving $\partial
_{\overline{z}}F(z,\overline{z})=0$, an iteration argument shows that every $
F(z,\overline{z})$ satisfying (\ref{eq:c1}) is indeed of the form (\ref
{polypolynomial}).

The polyanalytic Fock space $\mathbf{F}^{q}(\mathbb{C})$
consists of all the functions of the form $e^{-\tfrac{\pi}{2}
\left\vert
z\right\vert ^{2}}F(z,\overline{z})$, with $F(z,\overline{z})$ polyanalytic
functions of order $q-1$, supplied with the Hilbert space structure of $
L^{2}(\mathbb{C})$. The\ (infinite-dimensional) kernel of the polyanalytic
Fock space $\mathbf{F}^{q}(\mathbb{C})$\ is
\begin{equation}
\mathbf{K}^{q}(z,w)=L_{q}^{1}(\pi \left\vert z-w\right\vert ^{2})e^{\pi z
\overline{w}-\tfrac{\pi}{2} (\left\vert z\right\vert ^{2}+\left\vert w\right\vert ^{2})}
.  \label{polyFock}
\end{equation}
The connection to the Landau levels - see \cite{Abreusampling} for
applications of this connection to signal analysis and \cite{HendHaimi,AoP} to physics - follows from the following
orthogonal decomposition,
first observed by Vasilevski \cite{VasiFock}:
\begin{equation}
\mathbf{F}^{q}(\mathbb{C})=\mathcal{F}^{0}(\mathbb{C})\oplus ...\oplus
\mathcal{F}^{q-1}(\mathbb{C}),
\label{orthogonal}
\end{equation}
where $\mathcal{F}^{r}(\mathbb{C})$ is the pure Landau level eigenspace of order $r$, whence the terminology \emph{pure}
used in \cite{HendHaimi}. \emph{Pure} \emph{poly-Fock spaces }provide a full
orthogonal decomposition of the whole $L^{2}(\mathbb{C})$:
\begin{equation*}
L^{2}(
\mathbb{C}
)=\bigoplus_{r=1}^{\infty }\mathcal{F}^{r}(
\mathbb{C}
).
\end{equation*}
The formula for Laguerre polynomials $\sum_{r=0}^{q-1}L_{r}^{\alpha
}=L_{q-1}^{\alpha +1}$\ and (\ref{orthogonal})\ show that
\begin{equation*}
\mathbf{K}^{q}(z,w)=\sum_{r=0}^{q-1}\mathcal{K}^{r}(z,w),
\end{equation*}
where $\mathcal{K}^{r}(z,w)$ is the reproducing kernel (\ref{purekernel}) of
the pure Landau level eigenspace of order $r$.

\section{The Weyl-Heisenberg ensembles}
\label{sec_wh}

In this section we work with functions of several real variables, but keep a multi-index notation similar to the
univariate case. As before, we let $d=2\newd$ be an even positive integer.

\subsection{The Schr\"{o}dinger representation of the Heisenberg group}
The infinite Weyl-Heisenberg ensembles are DPP's associated with the representation of the Heisenberg group
(in \cite{pure}, finite-dimensional versions are investigated). Given a
\emph{window function} $g\in L^{2}({\mathbb{R}^{\newd}})$, the Schr\"{o}dinger
representation of the Heisenberg group $\mathbb{H}$ acts on $L^{2}({
\mathbb{R}^{\newd}})$ by means of the unitary operators
\begin{equation*}
T(x,\xi ,\tau )g(t)=e^{2\pi i\tau }e^{-\pi ix\xi }e^{2\pi i\xi t}g(t-x),
\qquad (x, \xi) \in \Rdst, \tau \in \Rst.
\end{equation*}
The corresponding representation coefficients are
\begin{equation*}
\left\langle f,T(x,\xi ,\tau )g\right\rangle =e^{-2\pi i\tau }e^{\pi ix\xi
}\left\langle f,e^{2\pi i\xi \cdot}g(\cdot-x)\right\rangle .
\end{equation*}

\subsection{Time-frequency analysis}

The short-time Fourier transform $V_{g}f(x,\xi )$ can be defined in terms of
the above representation coefficients by eliminating the variable $\tau$ as follows:
\begin{equation*}
V_{g}f(x,\xi )=e^{2\pi i\tau }e^{-\pi ix\xi }\left\langle f,T(x,\xi ,\tau
)g\right\rangle =\left\langle f,e^{2\pi i\xi \cdot}g(\cdot-x)\right\rangle .
\end{equation*}
We introduce convenient notation where we identify a pair $(x,\xi )\in {
\mathbb{R}^{d}}$ with the complex vector $z=x+i\xi \in {\mathbb{C}^{\newd}}$.
The \emph{time-frequency shifts} of a function $g:{\mathbb{R}^{\newd}}
\rightarrow {\mathbb{C}}$ are defined as follows:
\begin{equation*}
\pi (z)g(t):=e^{2\pi i\xi
t}g(t-x),\qquad z=(x,\xi )\in {\mathbb{R}^{\newd}}\times {\mathbb{R}^{\newd}},
\quad t\in {\mathbb{R}^{\newd}}.
\end{equation*}
With this notation, given a window function $g\in L^{2}({\mathbb{R}^{\newd}})$,
the \emph{short-time Fourier transform} of a function $f\in L^{2}({\mathbb{R}
^{\newd}})$ with respect to $g$ is
\begin{equation*}
V_{g}f(z):=\left\langle f,\pi (z)g\right\rangle ,\qquad z\in {\mathbb{R}^{2\newd}
}.
\end{equation*}
The subspace of $L^{2}({\mathbb{R}^{2\newd}})$ which is the image of $L^{2}({
\mathbb{R}^{\newd}})$ under the short-time Fourier transform with the window $g$,
\begin{equation*}
\mathcal{V}_{g}=\left\{ V_{g}f:f\in L^{2}({\mathbb{R}^{\newd}})\right\} \subset
L^{2}({\mathbb{R}^{2\newd}}),
\end{equation*}
is a Hilbert space with reproducing kernel given by
\begin{equation}
K_{g}(z,w)=\left\langle \pi({w})g,\pi({z})g\right\rangle _{L^{2}({\mathbb{R}
^{\newd}})}.  \label{repdef}
\end{equation}
With the notation $z=(x,\xi), w=(x',\xi')$, the kernel can be written explicitly as
\begin{equation}
\label{eq_kg_expl}
K_{g}(z,w)= \int_{\Rst^\newd}
\overline{g(t-x)} g(t-x') e^{2 \pi i t(\xi'-\xi)} dt.
\end{equation}
We can now introduce the WH ensembles.
\begin{definition}
Let $g\in L^{2}(\mathbb{R}^{\newd})$ be of norm 1 and such that
\begin{equation}
\label{eq_decay}
\left\vert V_{g}g(z)\right\vert \leq C \left( 1 + \left\vert z \right\vert \right)^{-s}
<+\infty ,
\end{equation}
for some $s>2\newd+1$ and $C>0$.
\emph{The infinite Weyl-Heisenberg ensemble} associated with the
function $g\in L^{2}(\mathbb{R}^{\newd})$ is the determinantal point process
with correlation kernel
\begin{equation*}
K_{g}(z,w)=\left\langle \pi({w})g,\pi({z})g\right\rangle _{L^{2}({\mathbb{R}^{\newd}})}.
\end{equation*}
\end{definition}
\begin{remark}
{\normalfont
The condition in \eqref{eq_decay} amounts to decay of $g$ in both the space and frequency variables,
and is satisfied by any Schwartz-class function.
}
\end{remark}
\begin{remark}
{\normalfont
The WH ensemble associated with a window $g$ is well-defined due to the Macchi-Soshnikov theorem
\cite{ma,so}. Indeed, the kernel $K_{g}$ represents a projection operator and we only need to verify
that it is locally trace-class. Given a compact domain $D$, the operator $T_{g,D}$ represented by
the localized kernel $K_{g,D}$ is known as a Gabor-Toeplitz operator. It is well-known that
$T_{g,D}$ is trace-class and that $\trace(T_{g,D})=\abs{D}$; see for example \cite{Charly, AGR}.
}
\end{remark}
\begin{remark}
\label{rm_ani}
\normalfont{
For general windows $g$, the resulting WH ensemble is \emph{statistically anisotropic} in the sense that the point-statistics may depend on
the vector displacements
between the points.
}
\end{remark}

Figure \ref{fig} shows realizations of Weyl-Heisenberg ensembles corresponding to
two different windows: the Gaussian and the Hermite function of order 7. As explained
in Section \ref{sec_xx}, these correspond to different Landau levels.

While the correlation kernel of a WH ensemble is not translationally invariant, a simple calculation shows that the
corresponding point process is. In addition, using the explicit formula for the kernel in \eqref{eq_kg_expl} and
\eqref{eq_h2}, we see that \emph{the total correlation function of an infinite WH ensemble} is:
\begin{align}
\label{eq_hg}
h_g(z)=-\abs{\int_{\Rst^\newd} g(t-x) \overline{g(t)} e^{2 \pi i t \xi} dt}^2
=-\abs{V_g g(z)}^2,
\qquad z=(x,\xi) \in \Rst^\newd \times \Rst^\newd = \Rst^d.
\end{align}
As we show in Section \ref{sec_xx}, for concrete choices of the underlying window function $g$, it is possible to get
explicit expressions for the corresponding function $h_g$. Moreover, in many cases,
these are radial functions. We next show that, in that case, we can describe the asymptotics of the structure factor
near the origin.

\begin{lemma}
Assume that the correlation function $h_g$ of a WH ensemble is radial. Then the corresponding structure factor
satisfies:
\begin{align*}
\abs{S_g(\mathrm{k})} \asymp \abs{\mathrm{k}}^2, \qquad \mbox{ as } \mathrm{k} \longrightarrow 0.
\end{align*}
More precisely, there exist two constants $c,C>0$ such that $c\abs{\mathrm{k}}^2 \leq
\abs{S_g(\mathrm{k})} \leq C \abs{\mathrm{k}}^2$, for $\mathrm{k}$ near 0.
\end{lemma}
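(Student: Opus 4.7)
The key identity to exploit is $h_g(\mathrm{r})=-\abs{V_g g(\mathrm{r})}^2$ from \eqref{eq_hg}, which turns the structure factor into
\begin{equation*}
S_g(\mathrm{k}) = 1 + \hat{h}_g(\mathrm{k}) = 1 - \widehat{\abs{V_g g}^2}(\mathrm{k}).
\end{equation*}
The plan is to Taylor expand $S_g$ at $\mathrm{k}=0$ to second order and show that (i) the zeroth-order term vanishes by Moyal's identity, (ii) the first-order term vanishes by radial symmetry, and (iii) the second-order term is an isotropic positive multiple of $\abs{\mathrm{k}}^2$.

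For the zeroth-order term, I would invoke Moyal's orthogonality relation for the short-time Fourier transform (the fact that $V_g$ is an isometry from $L^2(\Rst^\newd)$ into $L^2(\Rst^d)$ when $\norm{g}_2=1$), which yields $\norm{V_g g}_{L^2(\Rst^d)}^2 = \norm{g}_{L^2(\Rst^\newd)}^4 = 1$. Hence $\hat{h}_g(0)=\int_{\Rst^d}h_g\,d\mathrm{r} = -1$, so $S_g(0)=0$. For the first-order term, the radiality hypothesis on $h_g$ transfers to $\hat{h}_g$ and hence to $S_g$, so $\nabla S_g(0)=0$.

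For the Hessian at the origin, the decay hypothesis $\abs{V_g g(z)} \le C(1+\abs{z})^{-s}$ with $s>2\newd+1=d+1$ gives $\abs{h_g(\mathrm{r})} \le C^2(1+\abs{\mathrm{r}})^{-2s}$ with $2s>2d+2$, whence $\int_{\Rst^d}(1+\abs{\mathrm{r}}^2)\abs{h_g(\mathrm{r})}d\mathrm{r}<\infty$. Differentiation under the integral sign then shows $\hat{h}_g\in C^2(\Rst^d)$ with
\begin{equation*}
\partial_{k_j}\partial_{k_l}\hat{h}_g(0) \;=\; -\int_{\Rst^d} r_j r_l\,h_g(\mathrm{r})\,d\mathrm{r} \;=\; \int_{\Rst^d} r_j r_l\,\abs{V_g g(\mathrm{r})}^2 d\mathrm{r}.
\end{equation*}
Radiality of $\abs{V_g g}^2$ reduces this matrix to $(A/d)\,I_d$, where
\begin{equation*}
A \;:=\; \int_{\Rst^d} \abs{\mathrm{r}}^2\,\abs{V_g g(\mathrm{r})}^2\,d\mathrm{r}.
\end{equation*}
Positivity $A>0$ follows from the continuity of $V_g g$ together with $V_g g(0)=\norm{g}_2^2=1$: then $\abs{V_g g(\mathrm{r})}^2 \ge 1/2$ on a neighbourhood of the origin, so the integrand is strictly positive on a set of positive measure.

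Putting the three pieces together via Taylor's theorem yields $S_g(\mathrm{k}) = \frac{A}{2d}\abs{\mathrm{k}}^2 + o(\abs{\mathrm{k}}^2)$ as $\mathrm{k}\to 0$, which implies the required two-sided bound $c\abs{\mathrm{k}}^2 \le \abs{S_g(\mathrm{k})} \le C\abs{\mathrm{k}}^2$ for $\mathrm{k}$ near $0$ (and in fact $S_g(\mathrm{k})>0$ there). The only substantive step is the quantitative control of the Taylor remainder and the positivity of $A$; both come for free from the decay assumption and from $V_g g(0)=1$. The main bookkeeping subtlety is keeping track of the two ambient dimensions — $g$ lives on $\Rst^\newd$ while the ensemble and $V_g g$ live on $\Rst^d=\Rst^{2\newd}$ — when invoking Moyal's identity and when differentiating under the integral.
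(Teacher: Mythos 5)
Your proposal is correct and follows essentially the same route as the paper: Taylor expansion of $S_g$ to second order, with the constant term killed by the normalization of $g$, the linear and cross terms killed by radiality, and the Laplacian term shown to be nonzero because $h_g=-\abs{V_gg}^2\leq 0$ is not identically zero. Your write-up is in fact a bit more careful on two points the paper glosses over --- you correctly evaluate $S_g(0)=1+\hat{h}_g(0)=1-\norm{V_gg}_{L^2(\Rst^d)}^2=0$ via Moyal's identity (the paper's line ``$S_g(0)=1+h_g(0)$'' conflates $h_g(0)$ with $\hat h_g(0)$), and you justify the strict positivity of the second-moment integral via continuity of $V_gg$ at $0$ --- but these are refinements, not a different argument.
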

\begin{proof}
Using \eqref{eq_decay} and \eqref{eq_hg}, we see that $\abs{h_g(z)} \lesssim (1 + \abs{z})^{-(2 d +2)}$, and
consequently
\[\left( 1 + \abs{z}^2 \right) \abs{h_g(z)} \in L^1(\Rst^d, dz).\]
Since $S_g(\mathrm{k})=1+\hat{h}_g(\mathrm{k})$,
it follows
that the structure factor $S_g$ is a $C^2$ function, and we can Taylor expand it as
\begin{align}
\label{eq_taylor}
S_g(\mathrm{k}) = S_g(0) + \sum_{j=1}^d \partial_{\mathrm{k}_j} S_g(0) \mathrm{k}_j
+ \sum_{\abs{\alpha}=2} \partial^\alpha_{\mathrm{k}} S_g(0) \mathrm{k}^\alpha + o(\abs{\mathrm{k}}^2).
\end{align}
First, note that $S_g(0)=1+h_g(0)=1-\norm{g}_2^2=0$. Second, since $h_g$ is radial,
\begin{align*}
\partial_{\mathrm{k}_j} S_g(0) = -i \int_{\Rst^d} z_j h_g(z) dz = 0,
\end{align*}
and, therefore, the linear terms in \eqref{eq_taylor} vanish.
Similarly, the cross second derivatives in \eqref{eq_taylor} - $\partial_{\mathrm{k}_j} \partial_{\mathrm{k}_{j'}}
S(0)$, with $j\not=j'$ - vanish, leading to
\begin{align*}
S_g(\mathrm{k}) = \Delta_{\mathrm{k}} S_g(0) \abs{\mathrm{k}}^2 + o(\abs{\mathrm{k}}^2).
\end{align*}
Hence, it suffices to show that $\Delta_{\mathrm{k}} S_g(0) \not=0$. This is the case because
\begin{align*}
\Delta_{\mathrm{k}} S_g(0) = - \int_{\Rst^d} \abs{z}^2 h_g(z) dz,
\end{align*}
and $h_g \leq 0$, while $h_g \not\equiv 0$.
\end{proof}

\begin{figure}
\centering
\subfigure[Landau level \# 1.]{
\includegraphics[scale=0.5]{./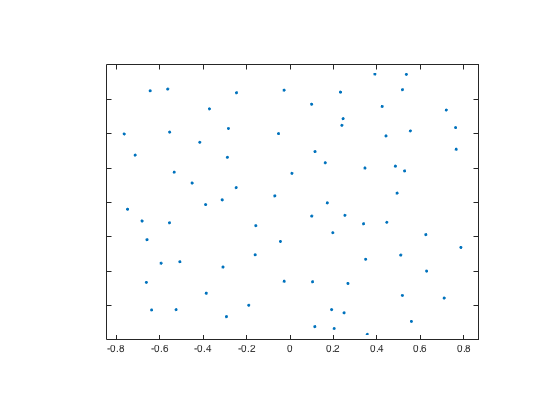}
\label{fig_levels}
}
\subfigure[Landau level \# 7.]
{
\includegraphics[scale=0.5]{./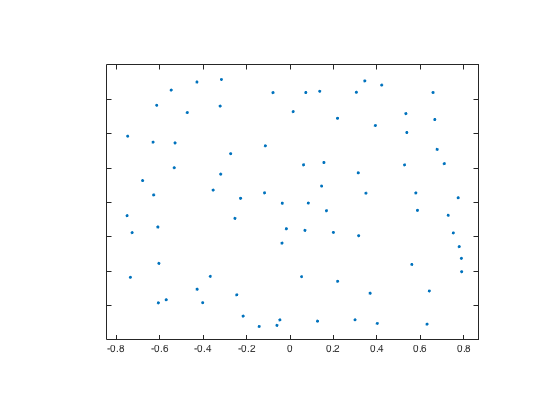}
}
\caption{WH ensembles corresponding to different Hermite windows.}
\label{fig}
\end{figure}

\subsection{Hyperuniformity of infinite Weyl-Heisenberg ensembles}

Now we will present our study of the variance of Weyl-Heisenberg
ensembles, relying on spectral methods originating from time-frequency
analysis. Given a set $\Omega \subseteq \mathbb{R}^{d}$, $\partial
\Omega $ denotes its frontier, $1_{\Omega }$ its characteristic
function, $\left\vert \Omega \right\vert $ its measure and $\left\vert \partial
{\Omega }\right\vert $ its perimeter (defined as the $d-1$ dimensional
measure of the boundary). The following is our main result.

\begin{theorem}
\label{th_1}
Weyl-Heisenberg ensembles are hyperuniform. More precisely, if
$\mathcal{X}$ is a WH-ensemble and $D_{R}\subset
\mathbb{R}^{d}$ is a $d$-dimensional ball of radius $R$, then, as $R\rightarrow \infty $,
\begin{equation*}
\sigma ^{2}(R)=\mathbb{V}\left[ \mathcal{X}(D_{R})\right] \lesssim R^{d-1}.
\end{equation*}
\end{theorem}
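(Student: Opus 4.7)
\bigskip

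\noindent\textbf{Proof plan.} The plan is to reduce the variance to an integral against the total correlation function $h_g$, and then exploit the fact that the ball $D_R$ has surface measure of order $R^{d-1}$ together with the rapid decay of $V_g g$.

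The starting point is the standard formula for the variance of a simple point process with intensity $\rho_1\equiv 1$ and pair intensity $\rho_2(z,w)=1+h_g(z-w)$:
\begin{equation*}
\sigma^2(D_R) \;=\; \mathbb{E}[\mathcal{X}(D_R)] \;+\; \int_{D_R\times D_R}\!\! h_g(z-w)\,dz\,dw \;-\; \mathbb{E}[\mathcal{X}(D_R)]^2.
\end{equation*}
Since $K_g(z,z)=\langle\pi(z)g,\pi(z)g\rangle=\norm{g}_2^2=1$, we have $\mathbb{E}[\mathcal{X}(D_R)]=\abs{D_R}$, so the leading $\abs{D_R}^2$ terms will have to cancel against the double integral. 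Changing variables $u=z-w$,
\begin{equation*}
\int_{D_R\times D_R}\!\! h_g(z-w)\,dz\,dw \;=\; \int_{\Rst^d} h_g(u)\,\abs{D_R\cap(D_R+u)}\,du.
\end{equation*}

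The crucial algebraic step is to invoke Moyal's identity (Plancherel for the STFT), which gives $\int_{\Rst^d} h_g(u)\,du = -\norm{V_g g}_2^2 = -\norm{g}_2^4 = -1$. Writing $\abs{D_R\cap(D_R+u)} = \abs{D_R} - (\abs{D_R}-\abs{D_R\cap(D_R+u)})$, the $\abs{D_R}^2$ contribution cancels the $\abs{D_R}^2$ from $\mathbb{E}[\mathcal{X}(D_R)]^2$, and the $\abs{D_R}$ term cancels the expectation, leaving the clean identity
\begin{equation*}
\sigma^2(D_R) \;=\; \int_{\Rst^d} \abs{V_g g(u)}^2 \bigl(\abs{D_R}-\abs{D_R\cap(D_R+u)}\bigr)\,du.
\end{equation*}

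The rest is a geometric plus decay estimate. For a Euclidean ball, one has the elementary bound
\begin{equation*}
\abs{D_R}-\abs{D_R\cap(D_R+u)} \;\lesssim\; \min\bigl(\abs{u}R^{d-1},\,R^d\bigr),
\end{equation*}
so splitting the integral at $\abs{u}=R$ and using the hypothesis \eqref{eq_decay}, namely $\abs{V_g g(u)}\lesssim (1+\abs{u})^{-s}$ with $s>2m+1=d+1$,
\begin{equation*}
\sigma^2(D_R) \;\lesssim\; R^{d-1}\!\int_{\abs{u}\le R}\!\!\abs{u}(1+\abs{u})^{-2s}\,du \;+\; R^d\!\int_{\abs{u}>R}\!\!(1+\abs{u})^{-2s}\,du \;\lesssim\; R^{d-1} + R^{2d-2s},
\end{equation*}
and $2d-2s<d-1$ since $s>d+1$, giving $\sigma^2(D_R)\lesssim R^{d-1}$.

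The substantive step is the cancellation via Moyal's identity that collapses the bulk $R^d$ contribution to a surface-type term; the geometric $\abs{D_R\setminus(D_R+u)}\lesssim R^{d-1}\abs{u}$ for balls is classical, and the decay hypothesis on $V_g g$ is precisely calibrated so that the first moment of $\abs{V_g g}^2$ is finite, which is what turns the surface estimate into the desired $R^{d-1}$ bound.
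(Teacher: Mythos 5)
Your argument is correct and follows essentially the same route as the paper: both reduce the variance to $\abs{D_R}-\int_{D_R\times D_R}\abs{K_g(z,w)}^2\,dz\,dw$ and then show that this quantity is a perimeter-order term controlled by the first moment of $\abs{V_gg}^2$. The only real difference is that the paper outsources that last step to a citation of Proposition 3.4 in \cite{AGR}, whereas you prove it from scratch via the change of variables $u=z-w$, Moyal's identity $\norm{V_gg}_2^2=\norm{g}_2^4=1$, and the geometric bound $\abs{D_R\setminus(D_R+u)}\lesssim\min(\abs{u}R^{d-1},R^d)$; this is precisely the content of the cited result, so your version is the more self-contained of the two. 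One slip to fix: your first displayed formula for $\sigma^2(D_R)$ is internally inconsistent --- with $h_g$ (rather than $\rho_2$) in the double integral, the term $-\mathbb{E}[\mathcal{X}(D_R)]^2$ should not appear, because the $\abs{D_R}^2$ it is meant to cancel arises from integrating the constant $1$ in $\rho_2=1+h_g$. Your subsequent cancellation narrative and the resulting identity $\sigma^2(D_R)=\int\abs{V_gg(u)}^2\left(\abs{D_R}-\abs{D_R\cap(D_R+u)}\right)du$ are nonetheless correct, as is the final splitting of the integral at $\abs{u}=R$ using $s>d+1$.
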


\begin{proof}
We want to show that the number variance $\sigma ^{2}(R)=\mathbb{E}\left[
\mathcal{X}(D_{R})^{2}\right] -\mathbb{E}\left[ \mathcal{X}(D_{R})\right]
^{2}$ - where $D_{R}$ is a $2\newd$-dimensional ball of radius $R$ - satisfies $
\sigma ^{2}(R)\lesssim R^{2\newd-1}$. We start defining the concentration operator
\begin{equation*}
(T_{D_{R}}f)(z)=\int_{D_{R}}f(w) K_{g}(z,w) dw.
\end{equation*}
Using (\cite[Equation (1.2.4)]{DetPointRand}) one can write the number variance of $
\mathcal{X}(D_{R})$\ as
\begin{eqnarray*}
\sigma ^{2}(R) &=&\mathbb{E}\left[ \mathcal{X}(D_{R})^{2}\right] -\mathbb{E}
\left[ \mathcal{X}(D_{R})\right] ^{2} \\
&=&\int_{D_{R}}K_{g}\left( z,z\right) dz-\int_{D_{R}\times D_{R}}\left\vert
K_{g}\left( z,w\right) \right\vert ^{2}dzdw \\
&=&\trace\left( T_{D_{R}}\right) -\trace\left( T_{D_{R}}^{2}\right) \\
&=&\left\vert D_{R}\right\vert -\int_{D_{R}\times D_{R}}\left\vert
K_{g}(z,w)\right\vert^2 dzdw.
\end{eqnarray*}
Now, one can use \cite[Proposition 3.4]{AGR} to obtain the upper inequality
for the number variance:
\begin{equation}
\sigma ^{2}(R)\lesssim \left\vert \partial D_{R}\right\vert \int_{\mathbb{R}
^{2\newd}}\left\vert z\right\vert \left\vert V_{g}g(z)\right\vert ^{2}dz\lesssim
R^{2\newd-1}.  \label{varsup}
\end{equation}
(See \cite{Landau1967} for applications of this kind of inequalities to sampling theory.)
\end{proof}
\begin{remark}
\label{rem_more}
{\normalfont
The proof of Theorem \ref{th_1} extends to more general observation windows. In this case,
the number variance is dominated by the perimeter of the observation window.
}
\end{remark}

The next result shows that $R^{d-1}$ is actually the precise rate of
convergence.

\begin{theorem}
\label{th_var}
The variance of a Weyl-Heisenberg ensemble satisfies, as $R\rightarrow \infty $,
\begin{equation*}
\sigma ^{2}(R)=\mathbb{V}\left[ \mathcal{X}(D_{R})\right] \sim R^{d-1}\text{,
}
\end{equation*}
where $D_{R}$ is a $d$-dimensional ball of radius $R$.
\end{theorem}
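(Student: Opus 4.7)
The plan is to establish the lower bound $\sigma^{2}(R) \gtrsim R^{d-1}$, which, together with the upper bound from Theorem \ref{th_1}, gives $\sigma^{2}(R) \sim R^{d-1}$.

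First, starting from the expression derived in the proof of Theorem \ref{th_1},
\begin{equation*}
\sigma^{2}(R) = \abs{D_{R}} - \int_{D_{R}\times D_{R}} \abs{K_g(z,w)}^2 \, dz\,dw,
\end{equation*}
I would use the translation invariance $\abs{K_g(z,w)}^2 = \abs{V_g g(z-w)}^2$ (cf.~\eqref{eq_hg}) together with Fubini to rewrite the double integral as $\int_{\Rdst} \abs{V_g g(u)}^2 \abs{D_R \cap (D_R + u)}\,du$. Combined with the orthogonality relation $\int_{\Rdst} \abs{V_g g(u)}^2\,du = \norm{g}_{2}^{4} = 1$ (Moyal's identity), this yields the key representation
\begin{equation*}
\sigma^{2}(R) = \int_{\Rdst} \abs{V_g g(u)}^2 \cdot \abs{D_R \setminus (D_R + u)}\,du.
\end{equation*}

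Second, I would prove the purely geometric estimate that, for each $u \in \Rdst$ with $\abs{u} \leq 1$ and each $R$ sufficiently large,
\begin{equation*}
\abs{D_R \setminus (D_R + u)} \geq c_d \abs{u} R^{d-1},
\end{equation*}
where $c_d > 0$ depends only on the dimension. This can be obtained by slicing $D_R$ along hyperplanes perpendicular to $u$, and in fact produces the sharper asymptotic $\abs{D_R \setminus (D_R + u)} \sim \omega_{d-1} \abs{u} R^{d-1}$ (with $\omega_{d-1}$ the volume of the unit ball in $\Rst^{d-1}$), uniformly in $u$ on the ball $\abs{u} \leq 1$.

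Restricting the integral above to $\abs{u} \leq 1$ and applying the geometric bound gives
\begin{equation*}
\sigma^{2}(R) \gtrsim R^{d-1} \int_{\abs{u} \leq 1} \abs{u} \cdot \abs{V_g g(u)}^2\,du,
\end{equation*}
and continuity of $V_g g$ together with $V_g g(0) = \norm{g}_2^2 = 1$ ensures that the remaining integral is a strictly positive constant, yielding $\sigma^{2}(R) \gtrsim R^{d-1}$. The main obstacle I anticipate is making the geometric estimate uniform in $u$ over a neighborhood of the origin, so that the constant $c_d$ does not degenerate; the slicing argument handles this, with the polar-cap region near $\partial D_R$ contributing only lower-order corrections that do not spoil the $R^{d-1}$ rate.
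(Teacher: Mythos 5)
Your argument is correct, and it takes a genuinely different route from the paper. The paper obtains the lower bound $\sigma^2(R)\gtrsim R^{d-1}$ spectrally: it writes $\sigma^2(R)=\trace(T_{D_R})-\trace(T_{D_R}^2)$, bounds this below by the (normalized) deviation $\abs{\#\{k:\lambda_k(R)>1-\delta\}-\abs{D_R}}$ of the eigenvalue counting function of the concentration operator, and then invokes the uniform eigenvalue estimates of DeMari--Feichtinger--Nowak for time-frequency localization operators to show this deviation is $\gtrsim R^{d-1}$. You instead exploit translation invariance of $\abs{K_g(z,w)}^2=\abs{V_gg(w-z)}^2$ plus Moyal's identity $\int\abs{V_gg}^2=\norm{g}_2^4=1$ to get the exact representation $\sigma^2(R)=\int\abs{V_gg(u)}^2\,\abs{D_R\setminus(D_R+u)}\,du$, and then a correct elementary slicing computation, $\abs{D_R\setminus(D_R+u)}=\omega_{d-1}\int_{-\abs{u}/2}^{\abs{u}/2}(R^2-t^2)^{(d-1)/2}dt\geq c_d\abs{u}R^{d-1}$ uniformly on $\abs{u}\leq 1$, finishes the job since $V_gg$ is continuous with $V_gg(0)=1$. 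Your approach is more self-contained (no external spectral input) and in fact stronger: with dominated convergence (the decay hypothesis \eqref{eq_decay} makes $\abs{u}\abs{V_gg(u)}^2$ integrable, and $\abs{D_R\setminus(D_R+u)}\leq\omega_{d-1}\abs{u}R^{d-1}$), it yields the precise asymptotic constant $\sigma^2(R)/R^{d-1}\rightarrow\omega_{d-1}\int_{\Rdst}\abs{u}\abs{V_gg(u)}^2du$, recovering and generalizing Shirai's constant mentioned after the Corollary. What the paper's spectral route buys in exchange is the link between the variance and the ``plunge region'' of the Gabor--Toeplitz operator, which is of independent interest and is the mechanism behind the extension to general observation windows noted in Remark \ref{rem_more}; your convolution identity adapts to such windows too, but one would then have to redo the geometric estimate for $\abs{\Omega\setminus(\Omega+u)}$ for each class of domains.
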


\begin{proof}
As we have seen in the proof of Theorem \ref{th_1},
\begin{equation*}
\sigma ^{2}(R)=\trace\left( T_{D_{R}}\right) -\trace\left(
T_{D_{R}}^{2}\right) .
\end{equation*}
Arguing as in the proof of Lemma 3.3 in \cite{AGR}, we obtain the
following formula, where the variance is bounded in terms of the counting
function of the eigenvalues of $T_{D_{R}}$ - $\{\lambda_k(R): k\geq 1\}$ - that are
above a certain threshold. More precisely, for $\delta \in (0,1)$:
\begin{equation}
\sigma ^{2}(R)=\trace\left( T_{D_{R}}\right) -\trace\left(
T_{D_{R}}^{2}\right) \geq \frac{\left\vert \#\{k\geq 1:\lambda
_{k}(R)>1-\delta \}-\left\vert D_{R}\right\vert \right\vert }{\max \left\{
\frac{1}{\delta },\frac{1}{1-\delta }\right\} }.
\label{estimateEigenVariance}
\end{equation}
Now, by \cite[Theorem 4.1]{DeMarie}, there exists $\delta $ independent of
$R$ such that
\begin{equation*}
\left\vert \#\{k\geq 1:\lambda _{k}(R)>1-\delta \}-\left\vert
D_{R}\right\vert \right\vert \gtrsim R^{2\newd-1}.
\end{equation*}
Combining this with (\ref{estimateEigenVariance}) leads to the lower
inequality
\begin{equation*}
\sigma ^{2}(R)\gtrsim R^{2\newd-1},
\end{equation*}
which, together with the upper inequality \eqref{varsup}, yields the result.
\end{proof}

\begin{remark}
\normalfont{
Theorem \ref{th_var} implies that the central limit theorem of Costin and Lebowitz \cite{cl} (in the general formulation of Soshnikov \cite{so2}) is applicable and, therefore, the random variables
$\mathcal{X}(D_R)$ - when properly rescaled - are asymptotically normal as $R \rightarrow \infty$.
}
\end{remark}

\begin{remark}
\normalfont{
Theorem \ref{th_var} extends a result of Shirai \cite{Shirai}, that concerns DPP's that are translationally and rotationally invariant
(with a suitably decaying correlation kernel). In this case, asymptotic formulas for the implied constants are also available.
For general windows $g$, WH ensembles do not need to be rotationally invariant, cf.
Remark \ref{rm_ani}. It is noteworthy that the hyperuniformity concept has recently been generalized
to incorporate anisotropic features \cite{dir}
and thus the WH
ensembles provide a rigorous testbed
to study directional hyperuniformity.
}
\end{remark}

\subsection{Weyl-Heisenberg ensembles for higher Landau levels: polyanalytic
Ginibre-type ensembles}
\label{sec_xx}

For $m=1$, using the notation $z=x+i\xi $ and $w=u+i\eta $, a calculation (see \cite
{AbreuMM}) shows that the reproducing kernel of $\mathcal{V}_{h_{r}}$ is
related as follows to the reproducing kernel of the pure Fock space of
polyanalytic functions:
\begin{equation}
K_{h_{r}}(\overline{z},\overline{w})=e^{-i\pi (u\eta -x\xi )-\pi \frac{\left\vert z\right\vert
^{2}+\left\vert w\right\vert ^{2}}{2}}L_{r}^{0}(\pi \left\vert
z-w\right\vert ^{2})e^{\pi z\overline{w}}.  \label{relkern}
\end{equation}
Thus, the operator $E$ which maps
$F$ to
\begin{equation*}
e^{-i\pi x\xi }F(\overline{z})
\end{equation*}
is an isometric isomorphism
\begin{equation*}
E:\mathcal{V}_{h_{r}}\rightarrow \mathcal{F}^{r}(\mathbb{C}).
\end{equation*}

Thus, all properties of Weyl-Heisenberg ensembles are automatically
translated to the polyanalytic ensembles, in particular the hyperuniformity
property. In addition, polyanalytic ensembles, as presented in Section \ref{sec_poly}, extend
verbatim to $\bC^\newd = \Rst^d$, provided that the formulae are interpreted in a vectorial sense.
With this understanding, we obtain from Theorem \ref{th_1} the following corollary.

\begin{coro}
The pure polyanalytic ensembles are hyperuniform, and, as $R\rightarrow \infty $,
\begin{equation*}
\sigma ^{2}(R)=\mathbb{V}\left[ \mathcal{X}(D_{R})\right] \sim R^{d-1},
\end{equation*}
where $D_{R}$ is a $d$-dimensional ball of radius $R$.
\end{coro}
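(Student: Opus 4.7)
The plan is to reduce the pure polyanalytic ensemble to a Weyl-Heisenberg ensemble with a Hermite window, and then invoke Theorem \ref{th_var}. The reduction rests on the identity \eqref{relkern}, which, for $\newd=1$, relates the pure polyanalytic kernel $\mathcal{K}^r$ from \eqref{purekernel} and the WH kernel $K_{h_r}$ associated with the Hermite window $h_r$ by
\begin{align*}
K_{h_{r}}(\overline{z},\overline{w}) = e^{-i\pi (u\eta - x\xi)}\,\mathcal{K}^{r}(z,w).
\end{align*}
The unimodular prefactor factors as $\phi(z)\overline{\phi(w)}$ with $\phi(z)=e^{i\pi x\xi}$, and such a ``gauge'' leaves every determinant $\det(K(z_i,z_j))_{1\le i,j\le k}$ -- and therefore the law of the determinantal point process -- invariant, since $\phi(z_i)\overline{\phi(z_j)}$ can be pulled out of row $i$ and column $j$, and the moduli cancel. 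The map $z\mapsto\overline{z}$ is a linear isometry of $\Rst^{2}$ and preserves balls $D_{R}$, so the two ensembles produce the same number variance on $D_{R}$.

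For general $\newd$, I would extend the pure polyanalytic ensemble to $\bC^{\newd}=\Rst^{d}$ through the tensor-product structure indicated in the paper (interpreting the Laguerre factor and the Gaussian vectorially). Correspondingly, the matching WH ensemble is built from a tensor-product window made of Hermite functions on $L^{2}(\Rst^{\newd})$. Since each factor is Schwartz-class, so is the tensor product, and the decay hypothesis \eqref{eq_decay} required in the definition of a WH ensemble is automatic. The same gauge argument as in the one-dimensional case then shows that the multi-index polyanalytic ensemble and this WH ensemble coincide as point processes on $\Rst^{d}$, again up to the isometry given by componentwise complex conjugation.

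With this identification in hand, Theorem \ref{th_var} applies directly to give the two-sided asymptotic $\sigma^{2}(R)\sim R^{d-1}$, with the upper bound supplied by Theorem \ref{th_1} and the matching lower bound by the eigenvalue counting estimate invoked in the proof of Theorem \ref{th_var}. The only subtle step -- and the place where care is required -- is the verification in Step 1 that multiplying a correlation kernel by a unimodular factor of the form $\phi(z)\overline{\phi(w)}$ yields the same DPP; beyond this, the proof is either a direct citation of Theorem \ref{th_var} or a routine bookkeeping extension to the multivariate Hermite setting.
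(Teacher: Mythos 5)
Your proposal is correct and follows essentially the same route as the paper: both use the kernel identity \eqref{relkern} to identify the pure polyanalytic ensemble with the WH ensemble for the Hermite window $h_r$ (the paper phrases this via the isometric isomorphism $E:\mathcal{V}_{h_r}\to\mathcal{F}^r(\mathbb{C})$, you via the equivalent gauge-and-conjugation invariance of the correlation determinants), and then invoke Theorems \ref{th_1} and \ref{th_var}. If anything, you are slightly more careful than the paper, which cites only Theorem \ref{th_1} even though the two-sided asymptotic $\sigma^2(R)\sim R^{d-1}$ also needs the lower bound from Theorem \ref{th_var}, and which leaves the unimodular-prefactor cancellation and the higher-dimensional tensor-product extension implicit.
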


In the case $d=2$, this has been proved in \cite[Theorem 1]{Shirai} using explicit computations which
also provide the value of the asymptotic constant for the pure polyanalytic
Ginibre ensemble of order $r$, $C_{r}=\frac{8}{\pi ^{2}}r^{1/2}$. As noted in Remark \ref{rem_more},
the variance bounds in Theorem \ref{th_1} also apply to more general observation windows, and these conclusions
therefore extend to the polyanalytic ensembles.

\section{Conclusions}
\label{sec_con}
We introduced the infinite Weyl-Heisenberg ensembles
in $\Rdst$
and showed that they are hyperuniform.
This provides another class
of examples of $d$-dimensional
determinantal point processes
that are hyperuniform beyond
the so-called Fermi-type varieties \cite{TorScaZac}. We also proved that the number variance associated with spherical observation windows of radius $R$ grows like the surface
area of the window, $R^{d-1}$. Due to the Costin-Lebowitz central limit theorem, this implies
that the number of particles of a WH ensemble within a growing observation window are asymptotically
normal random variables.

We gave
explicit formulas for the total correlation functions of WH ensembles.
 In the radial case, we also derived asymptotics
near the origin for the structure factor, and showed that
$S(\mathrm{k}) \asymp \mathrm{k}^2$
in the limit $\mathrm{k} \rightarrow 0$ in all
space dimensions.
The two-dimensional
point process associated with the
Ginibre ensemble has similar
quadratic in $k$ structure-factor asymptotics.

Special choices of the waveform $g$ in the definition of a WH ensemble lead to important point processes. Specifically,
we showed that when $g$ is chosen as a Hermite function, the corresponding
point process coincides with the so-called polyanalytic Ginibre ensemble of the pure type, which
models the distribution of electrons in higher Landau levels. The corresponding total correlation
functions resemble the ones of
the Ginibre ensemble, this time with a Laguerre polynomial as an additional multiplicative factor. In particular, they decay, for large distances, faster than exponential; specifically, like a Gaussian.

The family of Weyl-Heisenberg ensembles also includes processes that are \emph{structurally
anisotropic} in the sense that the point-statistics depend on the different spatial directions.
Thus, our work provides the first rigorous means to study \emph{directional hyperuniformity} of
point processes. In such instances,
it is relevant to consider a more general notion of hyperuniformity that accounts
for the dependence of the structure factor on the direction in which the origin in Fourier space is
approached \cite{dir}.

\ack
Part of the research for this article was conducted while L. D. A. and J. L. R. visited
the Program in Applied and Computational Mathematics at Princeton
University. They thank PACM and in particular Prof. Amit Singer
for their kind hospitality. L. D. A. was supported by the Austrian Science Fund (FWF): START-project FLAME (”Frames
and Linear Operators for Acoustical Modeling and Parameter Estimation”, Y 551-N13).
J. M. P. was partially supported by AFOSR awards FA9550-12-1-0317 and FA9550-13-1-0076
(of his advisor Amit Singer). J. L. R. gratefully acknowledges support from the Austrian Science Fund (FWF):
P 29462 - N35.

\section*{References}

\end{document}